%
%
%
\documentclass{amsproc}

\usepackage{hyperref}
\usepackage{url,graphicx}
\usepackage{amsthm,amsmath,amssymb,amsfonts}
\usepackage{showexpl}
\usepackage{mathrsfs,xr}
\usepackage[all,arc,curve,color,frame]{xy}
\usepackage{graphics,fullpage}
\xyoption{all}

\newtheorem{theorem}{Theorem}[section]
\newtheorem{definition}[theorem]{Definition}
\newtheorem{example}[theorem]{Example}
\newtheorem{proposition}[theorem]{Proposition}

\theoremstyle{remark}
\newtheorem{remark}[theorem]{Remark}

\numberwithin{equation}{section}





\newcommand{\TR}{\operatorname{Tr}}
\newcommand{\Tr}{\operatorname{tr}}

\newcommand{\g}{\mathfrak{g}} 
\newcommand{\su}{\mathfrak{su}}
\newcommand{\naturalnumber}{\mathbb{N}}

\newcommand{\Aa}{\mathcal{A}}
\newcommand{\Bb}{\mathcal{B}}

\newcommand{\Ss}{\mathcal{S}}

\newcommand{\Mfold}{M}
\newcommand{\Func}{\operatorname{Func}}

\newcommand{\Tt}{\mathcal{T}}


\newcommand{\SO}{\operatorname{s}}
\newcommand{\RA}{\operatorname{r}}

\newcommand{\HOL}{\operatorname{Hol}}

\newcommand{\Cc}{\mathbb{C}}
\newcommand{\Rr}{\mathbb{R}}

\newcommand{\Gg}{\mathcal{G}}
\newcommand{\Qq}{\mathcal{Q}}
\newcommand{\Ggamma}{\mathbf{\Gamma}}



\newcommand{\Plim}{\varprojlim}

\newcommand{\KER}{\operatorname{ker}}
\newcommand{\DIFF}{\operatorname{Diff}}


\newcommand{\EXP}{\operatorname{Exp}}

\begin{document}

\title{Strict Deformation Quantisation of the $G$-connections via Lie Groupoid}

\author{Alan Lai}
\address{Max Planck Institute for Mathematics -- Bonn}
\curraddr{Vivatsgasse 7, 53111, Bonn, Germany}
\email{alan@mpim-bonn.mpg.edu}
\thanks{The research is completed during the author's stay at the MPIM-Bonn}

%
%

\keywords{quantum gravity, Hamiltonian, Ashetekar, loop variables, quantisation, $G$-connections, tangent groupoid}

\begin{abstract}
Motivated by the compactification process of the space of connections in loop quantum gravity literature.
A description of the space of $G$-connections using the tangent groupoid is given.
As the tangent groupoid parameter is away from zero, the $G$-connections are (strictly) deformation
quantised to noncommuting elements using $C^*$-algebraic formalism.
The approach provides a mean to obtaining a semi-classical limit in loop quantum gravity.
\end{abstract}

\maketitle

\section{Introduction}
In Ashtekar's theory of gravity, a $SU(2)$-connection captures the extrinsic curvature of a space-like leaf
in a time-transversal foliation, and the intrinsic geometry of the leaf is given by a tetrad \cite{ashtekar}.
It is shown that the Einstein-Hilbert functional and Einstein equations can be written
in terms of  $SU(2)$-connections and tetrads, thus
these variables together  recast Einstein's theory of gravity.
By rewriting Einstein's gravity with the connection variables and tetrad variables, 
one could attempt to quantise gravity via the Hamiltonian formalism, and obtain a theory of quantum gravity \cite{ashham}.
The reader may refer to Thiemann's introductory \cite{LQG}.
This article offers an alternative view of the space of connections to ones that appear in other loop quantum gravity literatures, and proposes a semi-classical limit using  strict $C^*$-algebraic  deformation quantisation formalism \cite{rieffel}.

To take the connections as dynamic variables in a quantum theory, one studies wave functions, also known as probability amplitudes, on the space of connections. Unfortunately, the lack of a measure on such a space  poses the first challenge, since in such case probabilities cannot be defined.
One typical solution to obtain a measure on the connection space comes from spaces of progressively 
refined cylindrical functions. In another description, one uses  a finite set of curves to probe the space of 
$G$-connections to obtain a finite dimensional manifold  that depend on the sets of curves. By successively refining 
the finite sets of curves, such as successive triangulation of the manifold, one obtains a pro-manifold that extends
the  original space of connections to the so-called space of generalised connections \cite{baez}.

As a step to quantising gravity in the Ashtekar framework, there are recent developments of describing such an extended space of connections using a spectral triple in noncommutative geometry \cite{AGN,lai}, which captures the geometry
of the space of generalised connections as operators on a Hilbert space.
While the geometries of the base manifold and the  space of $G$-connections  on it are in theory retained, the construction of  the spectral triple is considered too discrete to practically allow one to recapture the geometry of the
base manifold and its $G$-connections.
To rid the discrete description of using finite sets of embedded curves, this article proposes an alternative approach to smoothly probe the space of connections using the tangent groupoid. And then followed by an application of strict deformation quantisation in $C^*$-algebraic formalism, a possible semi-classical limit of can be obtained.
The purposes of this article are to present an alternative idea to the studies of loop quantum gravity
with the goal of obtaining a semi-classical limit, which is what loop quantum gravity still lacks today.

This article contains six small sections, they are arranged as follows:
Section~\ref{sec2}
discusses the traditional way of putting  a measure on the space of connections in loop quantum gravity literature, and the 
short coming of such a method.
In Section~\ref{sec3}, we review the definitions of a Lie groupoid, a Lie algebroid, and tangent groupoid of a Lie groupoid in an elementary way.
Section~\ref{sec4} shows that how one uses the tangent groupoid as a tool 
to model  the space of connections in a smooth way, and discuses the corresponding gauge action.
Section~\ref{sec5} is the first attempt of deforming connections into noncommutating operators acting on a Hilbert space.
Section~\ref{sec6} introduces the notion of strict deformation quantisation in $C^*$-algebra formalism, and an important theorem by Landsman \cite{landsman}, which states that a tangent groupoid defines a strict deformation quantisation. Hence, using this formalism, we can deformation quantise $G$-connections.
Finally,  Section~\ref{sec7} is an outlook that summarises the article and the current state of work toward obtaining a semi-classical limit in loop quantum gravity.

\section{A Measure on the Space of Connections}
\label{sec2}

We start by requiring that $G$ to be a semi-simple, simply connected, compact Lie group, such as $SU(2)$.
And $\g$ to be the corresponding Lie algebra of $G$. $\exp $ is the exponential map from $\g$ to $G$.
The space of smooth $G$-connections $\Aa$ over a manifold $M$ is the space of $\g$-valued $1$-forms over $M$.

Given a smooth manifold $M$ and a smooth (compact) curve $\gamma$  in $M$,
one obtains a map $\HOL_\gamma$ from the space of smooth $G$-connections $\Aa$ to $G$
\[
\HOL_\gamma: \Aa \to G,
\]
given by taking the holonomy 
\[
\HOL_\gamma(A):= \exp\int_\gamma A
\]
of each connection $A\in \Aa$ along the curve $\gamma$.
Here it is assumed that there is a fixed local trivialisation of the principal bundle, so that $\Aa$ is realised as a vector space.

Suppose that there is a smoothly embedded finite graph $\Gamma$ in $M$ (edges are smooth, and the number of them is finite), one can repeat the holonomy evaluation to obtain a map from $\Aa$ to multiple copies of $G$'s, one for each edge.
Thus, one obtains
\begin{eqnarray}
\label{AGN:eqn:coarsegrainedhol}
\HOL_\Gamma:\Aa \to G^{\lvert \Gamma \rvert},
\end{eqnarray}
where $\vert \Gamma \rvert$ denotes the number of edges of the graph $\Gamma$.

\begin{proposition}[\cite{baezsawin,AGN}]
\label{AGN:prop:denserange}  
For any finite
graph $\Gamma$, 
the map
$\HOL_\Gamma$ \eqref{AGN:eqn:coarsegrainedhol} is a surjection.
\end{proposition}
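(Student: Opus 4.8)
The plan is to split the problem into two independent pieces: realising each prescribed group element as a holonomy in $G$, and realising the corresponding Lie algebra values as line integrals of a single $\g$-valued $1$-form. First I would fix a target $(g_1,\dots,g_n)\in G^{\lvert\Gamma\rvert}$, writing $e_1,\dots,e_n$ for the edges of $\Gamma$ so that $n=\lvert\Gamma\rvert$. Since $G$ is compact and, being simply connected, connected, the exponential map $\exp\colon\g\to G$ is surjective; hence for each $i$ there is an element $X_i\in\g$ with $\exp(X_i)=g_i$. It then suffices to construct a connection $A\in\Aa$, that is a $\g$-valued $1$-form on $M$, with $\int_{e_i}A=X_i$ for every $i$, since then $\HOL_{e_i}(A)=\exp\int_{e_i}A=g_i$ and therefore $\HOL_\Gamma(A)=(g_1,\dots,g_n)$.

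The construction of such an $A$ is where the hypotheses on $\Gamma$ enter. Because $\Gamma$ is smoothly embedded with finitely many edges, the edges meet only at the vertices, so the open arcs $\mathring{e}_i$ (the edges with their endpoints removed) are pairwise disjoint. I would choose an interior point $p_i\in\mathring{e}_i$ on each edge and a coordinate ball $U_i$ around $p_i$ small enough that the $U_i$ are pairwise disjoint and each $U_i$ meets $\Gamma$ only in a single arc of $e_i$ (in particular $U_i\cap e_j=\varnothing$ for $j\neq i$); this is possible precisely because there are finitely many edges and the arcs are disjoint and locally bounded away from the vertices. In local coordinates on $U_i$ in which $e_i$ is a coordinate segment, I would set $\omega_i:=X_i\,f_i\,dx^1$, where $f_i$ is a smooth bump function supported in $U_i$ and normalised so that $\int_{e_i}f_i\,dx^1=1$. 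By construction $\SUPP(\omega_i)\subset U_i$, so $\int_{e_i}\omega_i=X_i$ while $\int_{e_j}\omega_i=0$ for $j\neq i$.

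Setting $A:=\sum_{i=1}^{n}\omega_i\in\Aa$ then gives $\int_{e_i}A=\sum_{j}\int_{e_i}\omega_j=X_i$ for each $i$, which completes the argument. The main obstacle is the decoupling step: one must ensure that the $1$-form engineered to produce the correct integral along $e_i$ contributes nothing along the other edges. This is exactly where finiteness and embeddedness of $\Gamma$ are used, via the disjoint neighbourhoods $U_i$; if edges were allowed to overlap along arcs rather than meet only at vertices, the integrals could no longer be prescribed independently and the simple superposition $A=\sum_i\omega_i$ would fail. Note that the path-ordering issue that complicates genuine non-abelian holonomy does not arise here, since the holonomy in \eqref{AGN:eqn:coarsegrainedhol} is defined as $\exp$ of the ordinary $\g$-valued line integral.
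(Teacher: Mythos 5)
Your argument is correct and is essentially the standard one: the paper itself offers no proof, deferring to \cite{baezsawin,AGN}, where surjectivity is established by exactly this combination of the surjectivity of $\exp$ on a compact connected group with bump $1$-forms supported in disjoint neighbourhoods of interior points of the edges. As a side remark, your $\omega_i$ takes values in the single line $\Rr X_i$ along $e_i$, so the construction would survive unchanged even if $\HOL_\gamma$ were interpreted as a path-ordered exponential rather than $\exp$ of the plain line integral.
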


While the space of smooth $G$-connections $\Aa$ lacks structures, such as a measure, it
surjects to $G^{\lvert \Gamma \rvert}$, a compact measure space.
It is done by forgetting the values of the connections outside the edges of the graph. Hence, a lot of information is lost. However, one can imagine that there is a collection of finite graphs with one finer than the other, such that the collection of graphs is dense in the manifold $M$ in a certain sense. Therefore, at every small neighbourhood in the manifold, there is an edge from the collection of graphs in the neighbourhood to probe the holonomies of the connections.
We do not  define the notion of a  {\bf directed system} of finite graphs, but to say that it is defined naturally by the associated groupoid of a directed finite graph \cite{lai}.
From which, there is associated a directed system of compact measure spaces $G^{\lvert \Gamma_j \rvert }$.

To  put the above description in a mathematical context, we state
\begin{theorem}[\cite{AGN,lai}]
Denote by $\overline{\Aa}^\Ggamma$ the projective limit $
\Plim G^{\lvert \Gamma_j \rvert }$.
Suppose that there is a system of of smoothly embedded finite graphs $\Ggamma:=\{\Gamma_j\}_j$, such that the set of vertices is dense in $M$ and every neighbourhood of a point $x\in M$ contains
edges that span the vector space $T_x M$. 
Then there exists an embedding
\[ 
\HOL_\Ggamma: \Aa \hookrightarrow \overline{\Aa}^\Ggamma .
\]
\end{theorem}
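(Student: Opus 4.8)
The plan is to verify two things: that $\HOL_\Ggamma$ actually lands in the projective limit $\Plim G^{\lvert\Gamma_j\rvert}$, so that the map is well defined, and that it is injective, the latter being the real content of the word \emph{embedding}.

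\emph{Well-definedness.} For $A\in\Aa$ I would set $\HOL_\Ggamma(A):=\big(\HOL_{\Gamma_j}(A)\big)_j$ and check compatibility with the transition maps of the directed system. When $\Gamma_j$ is refined to $\Gamma_i$, each edge of $\Gamma_j$ is a concatenation of edges of $\Gamma_i$, and the transition map $G^{\lvert\Gamma_i\rvert}\to G^{\lvert\Gamma_j\rvert}$ is the composition law dictated by the associated groupoid of the directed graph \cite{lai}. Because holonomy is multiplicative under concatenation, $\HOL_{\Gamma_j}(A)$ is recovered from $\HOL_{\Gamma_i}(A)$, so the family is coherent. This step is bookkeeping rather than the mathematical heart, but it is where one must reconcile the composition law chosen for $\Plim$ with the definition $\HOL_\gamma=\exp\int_\gamma$.

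\emph{Injectivity.} Suppose $A_1,A_2\in\Aa$ satisfy $\HOL_\Ggamma(A_1)=\HOL_\Ggamma(A_2)$, and put $B:=A_1-A_2$, a smooth $\g$-valued $1$-form; the goal is $B\equiv 0$. Equality of the images means $\exp\int_e A_1=\exp\int_e A_2$ for every edge $e$ occurring in some $\Gamma_j$. I would first linearise: since $G$ is compact, $\exp$ restricts to a diffeomorphism of a ball $U\subset\g$ about $0$ onto a neighbourhood of the identity, and since the $A_k$ are smooth, $\big\lVert\int_e A_k\big\rVert$ is smaller than the radius of $U$ once $e$ is short enough. On such short edges injectivity of $\exp|_U$ forces $\int_e A_1=\int_e A_2$, i.e. $\int_e B=0$. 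Next I would recover $B$ pointwise: fix a vertex $x$, dense in $M$ by hypothesis; after enough refinement the system supplies arbitrarily short edges near $x$ whose initial directions span $T_xM$. For a sequence $e_n$ contracting to $x$ with normalised tangents converging to $v\in T_xM$, continuity of $B$ gives
\[
\frac{1}{\ell_{e_n}}\int_{e_n}B \;\longrightarrow\; B_x(v),
\]
with $\ell_{e_n}$ the length of $e_n$. Each left-hand integral vanishes by the first step, so $B_x(v)=0$; letting $v$ range over the spanning directions gives $B_x=0$. Thus $B$ vanishes on the dense vertex set, and continuity forces $B\equiv 0$, i.e. $A_1=A_2$.

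\emph{Main obstacle.} The delicate point is the passage from the integral constraints $\int_e B=0$ to the pointwise vanishing $B_x=0$. One must arrange the refinement so that, uniformly near each vertex, the edges are simultaneously short enough to lie inside the $\exp$-injectivity ball \emph{and} rich enough in direction to span $T_xM$ in the limit -- extracting the shrinking-length and converging-direction limits at once. This is exactly what the two hypotheses (density of vertices; edges spanning $T_xM$ in every neighbourhood) are designed to supply, so the crux of the proof is to turn those hypotheses into the two coupled limits above. A secondary care is the well-definedness step, where the non-abelian composition of the projective system must be squared with $\HOL_\gamma=\exp\int_\gamma$.
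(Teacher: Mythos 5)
The paper itself gives no proof of this theorem --- it is quoted from \cite{AGN,lai} --- so there is nothing internal to compare against; your argument is the standard separation-of-points argument one finds in those references (usually run in the contrapositive: if $A_1\neq A_2$ then some short edge near a point where $A_1-A_2\neq 0$ detects the difference), and its overall architecture is sound. Two points deserve sharper treatment than you give them. First, the well-definedness issue you raise in passing is more than bookkeeping: with the paper's literal formula $\HOL_\gamma(A)=\exp\int_\gamma A$, the family $(\HOL_{\Gamma_j}(A))_j$ is \emph{not} coherent under the multiplicative transition maps of the projective system, because $\exp\left(\int_{\gamma_1}A+\int_{\gamma_2}A\right)\neq\exp\left(\int_{\gamma_2}A\right)\exp\left(\int_{\gamma_1}A\right)$ for non-abelian $G$. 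One must read $\HOL_\gamma$ as genuine parallel transport (the path-ordered exponential), which is multiplicative under concatenation; your linearisation step then still works since $\log\HOL_e(A)=\int_e A+O(\ell_e^2)$, but you should say this rather than leave the reconciliation open.

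Second, the step you yourself flag as the main obstacle is a genuine gap as written. From $\int_e B=0$ for all short edges near $x$ you conclude $B_x(v)=0$ for $v$ a \emph{limit} of normalised edge directions, and then need such limits to span $T_xM$. The hypothesis only says that at each scale the edges span $T_xM$; a sequence of spanning sets can degenerate in the limit (e.g.\ directions $(1,0)$ and $(\cos(1/n),\sin(1/n))$ span $\Rr^2$ for each $n$ but both converge to $(1,0)$), and the $O(\ell_{e_n})$ error in $\frac{1}{\ell_{e_n}}\int_{e_n}B=B_x(v_n)+O(\ell_{e_n})$ can swamp the transverse component. So either you need a quantitative form of the spanning hypothesis (a uniform lower bound on the determinant of the normalised direction frames, which the paper's concrete examples --- barycentric subdivisions and dyadic lattices, where the direction sets are essentially scale-independent --- do satisfy), or you should close the argument by extracting, for each of $d$ fixed spanning directions, a subsequence of edges whose directions actually converge to it. Naming the obstacle is honest, but the proof is not complete until one of these is supplied.
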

From the property of the product (Tychonoff) topology, one has that
\begin{proposition}[\cite{baezsawin,AGN}]
\label{AGN:prop:denserange}  
Let $\Ss=\{\Gamma_i\}_{i\in I}$ be a directed system of finite graphs
in $M$. Then
the image of 
$\Aa$ under $\HOL_\Ggamma$ is dense in  $\overline{\Aa}^\Ggamma$.
That is
\[
\overline{
\HOL_\Gamma(\Aa)}=\overline{\Aa}^\Ggamma.
\]
\end{proposition}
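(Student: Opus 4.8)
The plan is to exploit the finite-level surjectivity of the holonomy map (Proposition~\ref{AGN:prop:denserange}, the surjectivity of $\HOL_\Gamma$ for a finite graph) together with the explicit description of the topology on the projective limit $\overline{\Aa}^\Ggamma=\Plim G^{\lvert\Gamma_j\rvert}$. Recall that $\overline{\Aa}^\Ggamma$ carries the subspace topology inherited from the product (Tychonoff) topology on $\prod_j G^{\lvert\Gamma_j\rvert}$, so a basis for its open sets is furnished by the sets $\pi_j^{-1}(U)$, where $\pi_j\colon \overline{\Aa}^\Ggamma\to G^{\lvert\Gamma_j\rvert}$ is the canonical projection and $U\subseteq G^{\lvert\Gamma_j\rvert}$ is open. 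To prove density it suffices to show that every nonempty basic open set of this form meets the image $\HOL_\Ggamma(\Aa)$.

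First I would record the compatibility of the holonomies with the structure of the limit, namely $\pi_j\circ\HOL_\Ggamma=\HOL_{\Gamma_j}$ for each index $j$; this is precisely the universal property through which $\HOL_\Ggamma$ is defined as a map into the projective limit. A general basic open set constrains finitely many coordinates at once, but since $\Ss=\{\Gamma_i\}_{i\in I}$ is directed one may choose a single index $j$ dominating all of them and, using the bonding maps $p_{jk}\colon G^{\lvert\Gamma_k\rvert}\to G^{\lvert\Gamma_j\rvert}$ of the system, rewrite the neighbourhood in the reduced form $\pi_j^{-1}(U)$ with $U$ open in $G^{\lvert\Gamma_j\rvert}$. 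Thus it is enough to treat basic opens depending on a single index.

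Now fix such a nonempty basic open set $\pi_j^{-1}(U)$. Nonemptiness means there is a point $\bar A\in\overline{\Aa}^\Ggamma$ with $\pi_j(\bar A)\in U$, so $U$ is a nonempty open subset of $G^{\lvert\Gamma_j\rvert}$. By the finite-level surjectivity of $\HOL_{\Gamma_j}$ there exists $A\in\Aa$ with $\HOL_{\Gamma_j}(A)=\pi_j(\bar A)\in U$, and then the compatibility above gives $\pi_j(\HOL_\Ggamma(A))=\HOL_{\Gamma_j}(A)\in U$, i.e. $\HOL_\Ggamma(A)\in\pi_j^{-1}(U)$. Hence every nonempty basic open set meets the image, so $\HOL_\Ggamma(\Aa)$ is dense and $\overline{\HOL_\Gamma(\Aa)}=\overline{\Aa}^\Ggamma$.

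The only genuinely delicate point is the reduction step of the second paragraph: one must verify that directedness of $\Ss$ and the compatibility of the finite-level holonomies with the bonding maps really do collapse a multi-coordinate neighbourhood to a single-index one. Once the projective system is arranged so that the $p_{jk}$ are the forgetful restrictions induced by graph refinement, this is routine, and the entire statement then rests on the finite-level surjectivity already in hand.
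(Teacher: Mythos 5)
Your proof is correct and is exactly the argument the paper gestures at when it says the result follows ``from the property of the product (Tychonoff) topology'' together with the finite-level surjectivity of $\HOL_{\Gamma}$; the paper itself gives no further detail, deferring to the cited references. Your reduction of a multi-index basic open set to a single-index one via directedness and the bonding maps, followed by an appeal to surjectivity at that level, is the standard and intended route.
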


This compactification procedure depends on the system of graphs used in probing the connection space. 
We give some examples of graph systems that provide good compactifications.

\begin{example}
\mbox{}
\begin{enumerate}
\item
\label{AGN:ex:triangulation}
Let $\Tt_1$ be a triangulation of $M$ and $\Gamma_1$ be the graph
consisting of all the edges in this triangulation with
any orientation.
Let $\Tt_{n+1}$ denote the triangulation obtained by
barycentric subdivision of each of the simplices in
 $\Tt_1$ $n$ times. The graph $\Gamma_{n+1}$ is
the graph consisting  the edges of $\Tt_{n+1}$ with consistent
orientation.
 In this way 
$\Ss_\triangle :=\{\Gamma_n \}_{n\in \naturalnumber}$
 is a directed system of finite graphs, and $\Aa$ densely embeds into $\overline{\Aa}^{\Ss_\triangle}$.
%
\item
\label{AGN:ex:dlattice}
 Let $\Gamma_1$ be a finite, 
$d$-dimensional lattice in $M$ and let $\Gamma_2$ denote the
lattice obtained by subdividing each cell in $\Gamma_1$ into $2^d$ cells. 
Correspondingly, let $\Gamma_{n+1}$ denote the lattice obtained by repeating $n$ such subdivisions
of $\Gamma_0$ . In this way 
 $\Ss_\square :=\{\Gamma_n\}_{n\in \naturalnumber}$
 is a directed system of finite graphs, and $\Aa$ densely embeds into $\overline{\Aa}^{\Ss_\square}$.
 \end{enumerate}
\end{example}
Moreover, $\overline{\Aa}^\Ggamma$ is a compact measure space. Therefore, such a procedure of 
surjecting $\Aa$ to a coarse approximation $G ^{\lvert \Gamma_j \rvert }$, then consider the limit of the approximations provide an extension of the space $\Aa$ to a compact measure space $\overline{\Aa}^\Ggamma$.
The space $\overline{\Aa}^\Ggamma$ is called the space of generalized connections.
Now it is possible to consider probability amplitudes on the space of generalized connections, and proceed to canonical quantisation.

The  drawback of this compactification is that the original space of connections $\Aa$ is forever lost in $\overline{\Aa}^\Ggamma$, consequently obtaining a semi-classical limit from  quantisation on $\overline{\Aa}^\Ggamma$  is impossible.
The source of the problem comes from probing the space $\Aa$ with finite graphs, which are very rigid objects that cannot be perturbed.
One can understand or abstract this process of compactifaction of $\Aa$ as probing $\Aa$ with a groupoid, where for the case of a graph $\Gamma$, the groupoid is the associated fundamental groupoid. It is a finitely generated groupoid, which is very discrete and cannot be perturbed.
However, one can replace this discrete groupoid with a smooth groupoid, say a Lie groupoid. This article proposes the use of the tangent groupoid.


\section{Tangent Groupoid}
\label{sec3}

\begin{definition}
\label{groupoid}
A {\bf groupoid} $\Gg$ is a (small) category in which every morphism is invertible.
That is, a set of morphisms $\Gg$ together with a set of objects $X$ such that
\begin{itemize}
\item
there exist surjective structure  maps, called the source and range maps
\[
\Gg  \rightrightarrows ^{\hspace{-0.25cm}^{\RA}}
_{\hspace{-0.25cm}_{\SO}}X ,\]
\item
there exists an injection, called the identity inclusion,
\[
i:X \hookrightarrow \Gg,
\]
\item
there exists a partially-defined associative composition
\[
\Gg\times \Gg \to \Gg
\]
with identities $i(X)$, and
\item
there exists an inversion map
\[
\left( \mbox{ } \right) ^{-1}: \Gg \to \Gg
\] with the usual properties.

\end{itemize}

\end{definition}

\begin{definition}
\label{liegroupoid}
A {\bf Lie groupoid} is a groupoid
$\Gg  \rightrightarrows ^{\hspace{-0.25cm}^{\RA}}
_{\hspace{-0.25cm}_{\SO}}X$ with smooth manifold structures on $\Gg$ and $X$ such that $\SO, \RA$ are 
submersions, the inclusion of $X$ in $\Gg$ as the identity homophism and the composition $\Gg \times \Gg \to \Gg$ are smooth.
\end{definition}

\begin{example}[\cite{tangent}]
\label{liegroupoidex}\mbox{ }
\begin{enumerate}
\item Any Lie group $G$ is a groupoid $G  \rightrightarrows ^{\hspace{-0.25cm}^{\RA}}
_{\hspace{-0.25cm}_{\SO}} \{e\}$ over the identity $e$.
\item
The tangent bundle $T\Mfold$ of a manifold $\Mfold$ forms a Lie groupoid $T\Mfold  \rightrightarrows ^{\hspace{-0.25cm}^{\RA}}
_{\hspace{-0.25cm}_{\SO}}\Mfold$ with the \emph{source} and \emph{range} maps 
$\SO, \RA: T\Mfold\to \Mfold$ given by 
$\SO(x,V_x) = x = \RA(x,V_x)$ for $(x,V_x) \in T\Mfold$, the inclusion $\Mfold \hookrightarrow T\Mfold$ given by
the zero section, and composition $T\Mfold\times T\Mfold \to T\Mfold$ given by $(x,V_x )\times (x,W_x) \mapsto 
(x,V_x + W_x)$.
\item
The product $\Mfold\times \Mfold$ forms a Lie groupoid $\Mfold \times \Mfold
  \rightrightarrows ^{\hspace{-0.25cm}^{\RA}}
_{\hspace{-0.25cm}_{\SO}} \Mfold $ with the \emph{source} and \emph{range} maps 
$\SO, \RA: \Mfold \times \Mfold \to \Mfold$ given by 
$\SO(x,y ) = x $, $ \RA(x,y)=y$, the inclusion $\Mfold \hookrightarrow \Mfold\times \Mfold$ is given by the 
diagonal embedding, and the composition 
$(\Mfold\times \Mfold) \times (\Mfold\times \Mfold) \to \Mfold \times \Mfold$ is given by
$(x,y) \times (y,z) \mapsto (x,z)$.

\item
Given two Lie groupoids 
$\Gg  \rightrightarrows ^{\hspace{-0.25cm}^{\RA}}
_{\hspace{-0.25cm}_{\SO}}X$ and $\Gg'  \rightrightarrows ^{\hspace{-0.25cm}^{\RA'}}
_{\hspace{-0.25cm}_{\SO'}}X'$, their direct product 
$\Gg \times \Gg'  \rightrightarrows ^{\hspace{-0.40cm}^{\RA \times \RA'}}
_{\hspace{-0.40cm}_{\SO \times \SO '}}X\times X'$
is also a Lie groupoid.

\end{enumerate}

\end{example}
\begin{definition}
 A \textbf{Lie algebroid} on a manifold $\Mfold$ is a vector bundle $E$ over $\Mfold$, which  is equipped with a vector bundle map $\rho : E \to T\Mfold$(called the anchor), as well as with a Lie bracket $[ \mbox{ },\mbox{ } ]_E$ on the space 
 $C^\infty(M,E)$ of smooth sections of $E$, satisfying
 \[
 \rho \circ [X,Y]_E =[\rho \circ X,\rho \circ Y],
 \]
where the right-hand side is the usual commutator of vector fields on $C^\infty(\Mfold,T\Mfold)$,
and 
\[ [X,fY]_E =f[X,Y]_E +((\rho \circ X)f)Y\]
 for all $X,Y \in C^\infty(M,E)$ and $f \in C^\infty(\Mfold)$.
 \end{definition}

\begin{remark}
A Lie algebroid is also a Lie groupoid with groupoid product given by fibre-wise addition.
\end{remark}
 \begin{example}\mbox{ }
 \label{algebroidex}
 \begin{enumerate}

\item
A Lie algebra $\g$ with its Lie bracket is a Lie algebroid over a point.
 
 \item
The tangent bundle $T\Mfold$ of a manifold $\Mfold$  defines a Lie algebroid under the Lie bracket of vector fields, and the anchor map $\rho : T\Mfold \to T\Mfold$ is the identity.

  \item \label{algebroidex3}
 For a Lie groupoid $\Gg  \rightrightarrows ^{\hspace{-0.25cm}^{\RA}}
_{\hspace{-0.25cm}_{\SO}}\Mfold$, let
 $A(\Gg)$ be the normal vector bundle defined by the embedding $\Mfold   \hookrightarrow \Gg$, with
 bundle projection given by $s$. Identify the normal bundle by $\KER d r \rvert _\Mfold $, the anchor map is given by
 $\rho:=ds: \KER dr\to T\Mfold$.
 Finally, by identifying $\KER d r$ with 
$C^\infty(\Gg, T \Gg)^L$, equip $A(\Gg)$ with the Lie bracket coming from $C^\infty(\Gg, T \Gg)$.
 $A(\Gg)$ is a Lie algebroid.
  
 \item
 Following the construction in the first example. $T\Mfold \times \g$ is the Lie algebroid of the Lie groupoid $M\times M \times G$. The anchor map $\rho$ consist of bundle projection on $T\Mfold$ and zero on $\g$.
 
 \end{enumerate}
 \end{example}
 
 In Example~\ref{algebroidex}\eqref{algebroidex3} above, $A(\Gg)$ is called the associated Lie algebroid of the Lie groupoid $\Gg  \rightrightarrows ^{\hspace{-0.25cm}^{\RA}}
_{\hspace{-0.25cm}_{\SO}}\Mfold$.
It is in itself a groupoid over $M$, similar to $TM$.
\begin{theorem}[\cite{mackenzie}]
Given the associated Lie algebroid of $A(\Gg)$ of the Lie groupoid $\Gg  \rightrightarrows ^{\hspace{-0.25cm}^{\RA}}
_{\hspace{-0.25cm}_{\SO}}\Mfold$. There exists a unique local diffeomorphism 
$\EXP:
A(\Gg) \to \Gg$.
\end{theorem}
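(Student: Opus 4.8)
The plan is to construct $\EXP$ exactly as one builds the exponential map of a Lie group, namely through flows of invariant vector fields, and then to deduce that it is a local diffeomorphism from a dimension count together with the inverse function theorem; the whole argument is essentially the groupoid version of the statement that $\exp\colon\g\to G$ is a local diffeomorphism near $0$.

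First I would invoke the identification from Example~\ref{algebroidex}\eqref{algebroidex3}: a section of $A(\Gg)=\KER d\RA|_\Mfold$ extends uniquely to a vector field on $\Gg$ that is tangent to the fibres of $\RA$ and invariant under the groupoid translations, i.e.\ an element of $C^\infty(\Gg,T\Gg)^L$. In particular a single vector $v\in A(\Gg)$ sitting over $x=\SO(v)\in\Mfold$ determines such an invariant field $\overrightarrow{v}$ on $\Gg$ whose value at the unit $i(x)$ recovers $v$. Since $\overrightarrow{v}$ lies in $\KER d\RA$, its integral curve through $i(x)$ stays inside the single fibre $\RA^{-1}(x)$. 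I would then define $\EXP(v)$ to be the time-one value $c_v(1)$ of the integral curve $c_v$ of $\overrightarrow{v}$ with $c_v(0)=i(x)$, mirroring the group case where $\exp(X)$ is the time-one flow of the invariant field through the identity. Smooth dependence of ODE solutions on initial data and on parameters shows that $v\mapsto \EXP(v)$ is smooth and defined for $v$ in some open neighbourhood $\Uu$ of the zero section, because at $v=0$ the integral curve is constant at $i(x)$ and so the time-one flow certainly exists there. Global completeness of these flows is unavailable in general, which is precisely why the statement claims only a local diffeomorphism.

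It remains to verify that $\EXP$ is a local diffeomorphism along the zero section, and here the dimensions already match: $\dim A(\Gg)=\dim\Mfold+\operatorname{rank}A(\Gg)=\dim\Mfold+(\dim\Gg-\dim\Mfold)=\dim\Gg$. I would compute $d\EXP$ at a point $x$ of the zero section. Since $\EXP(0_x)=i(x)$, the map $\EXP$ restricts to the inclusion $i$ on the zero section, so along $\Mfold$ its derivative is $di$. Transversally, along the fibre $A(\Gg)_x$, the derivative at $0$ is the canonical identification $A(\Gg)_x=(\KER d\RA)_{i(x)}=T_{i(x)}\RA^{-1}(x)$. Using the splitting $T_{i(x)}\Gg=di(T_x\Mfold)\oplus(\KER d\RA)_{i(x)}$, which holds because $\RA\circ i=\ID_\Mfold$ forces $di(T_x\Mfold)\cap\KER d\RA=0$ by dimensions, these two pieces are complementary, so $d\EXP_x$ is invertible. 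The inverse function theorem then yields a local diffeomorphism from a neighbourhood of the zero section onto a neighbourhood of $\Mfold$ in $\Gg$.

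For uniqueness I would use the defining flow property: any candidate must send $tv$ to $c_v(t)$, where $t\mapsto\EXP(tv)$ is the integral curve of $\overrightarrow{v}$ through $i(x)$ with initial velocity $v$, and uniqueness of integral curves of a smooth vector field then pins $\EXP$ down wherever it is defined. The main obstacle I anticipate is bookkeeping the invariance conventions, that is, keeping straight whether $\overrightarrow{v}$ is tangent to the $\RA$-fibres or the $\SO$-fibres, so that the flow lands in the correct fibre and the transversal derivative at the zero section comes out as the asserted identification rather than its inverse; once the convention is fixed consistently with the anchor $\rho=d\SO$ and $A(\Gg)=\KER d\RA|_\Mfold$ of Example~\ref{algebroidex}\eqref{algebroidex3}, the computation is routine.
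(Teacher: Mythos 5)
The paper gives no proof of this statement --- it is quoted from Mackenzie and the author explicitly declines to elaborate the construction of $\EXP$ --- so there is nothing to compare against; I can only assess your argument on its own terms. Your second half (the dimension count, the splitting $T_{i(x)}\Gg = di(T_x\Mfold)\oplus(\KER d\RA)_{i(x)}$ forced by $\RA\circ i=\ID_\Mfold$, and the inverse function theorem) is correct and is exactly how one shows that \emph{any} map with the right $1$-jet along the zero section is a local diffeomorphism near $\Mfold$. The problem is in the first half.

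The genuine gap is the claim that a single vector $v\in A(\Gg)_x$ determines an invariant vector field $\overrightarrow{v}$ on $\Gg$. It does not. Invariant vector fields on a groupoid correspond to \emph{sections} of $A(\Gg)$: the value of the invariant extension at a non-unit arrow $\gamma$ is obtained by translating the value of the section at the other endpoint of $\gamma$, so as the integral curve moves away from $i(x)$ along the fibre, the vector field at the new point depends on the section at points of $\Mfold$ other than $x$. (This is precisely where groupoids differ from groups, where the base is a single point and no such dependence can arise.) Consequently your $c_v(1)$ is not well defined as a function of $v$ alone, and the map $\EXP$ you describe does not exist without further input. The standard repair --- and what Mackenzie and the tangent-groupoid literature actually do --- is to choose auxiliary data (a linear connection on $A(\Gg)$, equivalently a spray or a tubular neighbourhood of $i(\Mfold)$ in $\Gg$ adapted to the $\RA$-fibration) and exponentiate fibrewise along geodesics. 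That yields existence, but it also shows that your uniqueness argument cannot work as written: the flow characterisation $t\mapsto\EXP(tv)$ pins down $\EXP$ only once the (section-dependent) vector field is fixed, and different choices of connection give genuinely different local diffeomorphisms agreeing only to first order along the zero section. The honest statement is existence plus uniqueness of the $1$-jet along $\Mfold$ (equivalently, uniqueness up to isotopy of tubular neighbourhoods), which is all the tangent-groupoid gluing actually requires.
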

 
\begin{definition}
Let $\Gg  \rightrightarrows ^{\hspace{-0.25cm}^{\RA}}
_{\hspace{-0.25cm}_{\SO}}\Mfold$ be a Lie groupoid with Lie algebroid $A(\Gg)$.
The {\bf tangent groupoid} $\Tt \Gg$ of  $\Gg $ is the Lie groupoid $\Tt\Gg$ over the base
$M\times [0,1]$, such that
\begin{itemize}
\item As a set, $\Tt \Gg:=  A(\Gg) \times \{0\}  \bigsqcup \Gg \times (0,1] $;
\item
And $A(\Gg)$ and $\Gg$ are glued together by the local diffeomorphism
$\EXP: A(\Gg) \to \Gg$.
\end{itemize}

\end{definition}
We do not elaborate the definition of $\EXP: A(\Gg) \to \Gg$ here, but rather attempt to illustrate it with an example below.

\begin{example}[\cite{ncg}]\mbox{}
\label{tangentgroupoid}
\begin{enumerate}
\item
The tangent groupoid $\Tt G$ of a Lie group $G$ is just $\g  \times \{0\} \bigsqcup G \times (0,1]$ glued 
together with the exponential map.
\item
The  tangent groupoid $\Tt \Mfold$ of a manifold $\Mfold$ is 
$\Tt \Mfold= T\Mfold \times \{ 0 \} \bigsqcup \Mfold \times \Mfold \times
(0,1]$ as a set.
And the groupoids 
 $T\Mfold $ and $ \Mfold \times \Mfold $ are glued together such that
for
$
p(0) =  
(x,V_x,0) \in T\Mfold \times \{0\}$,
then $p(\hbar)= \left(x,\exp \hbar V_x, \hbar\right)\in \Mfold \times \Mfold \times (0,1]$.
\end{enumerate}
\end{example}

We think of an element $(x,y,\hbar)$ of $\Mfold\times \Mfold \times (0,1]$ as
a geodesic starting at $x$ and ending at $y$, and $\hbar$ is the time it takes to travel from $x$ to $y$ 
in a given velocity. Hence, it is considered a one dimensional object.

\section{Connections as Functions on Tangent Groupoid}
\label{sec4}
In this section, we propose using a smooth groupoid to probe the connection space with, and the holonomies will be encoded by the smooth $G$-valued functions over the smooth groupoid.

Denote by $\Aa_\hbar$ the space of smooth functions from $\Mfold \times \Mfold$ to $G$ and 
$\Aa_0$ the space of smooth functions from $T\Mfold$ to $\g$ such that $A_0(x,\cdot):
T_x \Mfold \to \g$ is linear for $A_0 \in \Aa_0$ and each $x\in \Mfold $.\\

Here we think of an element $A_\hbar$ of $\Aa_\hbar$ as a holonomy presentation of a connection
for paths described by $\Mfold\times \Mfold \times (0,1]$.

\begin{definition}
\label{qconnection}
Define the space of {\bf q-connections} $\Func(\Tt \Mfold, \Tt G)$ to be
$\Aa_0 \times \{0\} \bigsqcup \Aa_\hbar \times (0,1]$ as a set.
And $\Aa_0$ and $\Aa_\hbar$ are glued together as
\[A_\hbar (x, \exp \hbar V_x ) = \exp (\hbar A_0(x,V_x))\] for all $\hbar \in [0,1]$.
\end{definition}

The definition is inspired by the following intuition.
The two points $x$ and $\exp \hbar V_x$ are connected by the geodesic $\gamma:t\to \exp t V_x$ for $t\in [0,\hbar]$.
And the holonomy along $\gamma$ is some group element. As $\hbar$ approaches to zero and
 the end point of $\gamma$ shrinks to its starting point $x$, the holonomy contribution gets closer to the identity element in the group. Therefore, the infinitesimal of the geodesic $\gamma$ at $x$ gives the infinitesimal change in the group; so for a point $(x,V_x)$ in $T\Mfold$, one associates it an element in $\g$.

One observes the following remarks.
\begin{remark}
The gluing condition of $\Func(\Tt \Mfold, \Tt G)$ implies that \[A_h (x,y) \cdot A_h (y,x) \longrightarrow 
^{\hspace{-0.6cm}\hbar \to 0 } I_G\] and \[A_h(x,x) \longrightarrow 
^{\hspace{-0.6cm}\hbar \to 0 } I_G,\] where $I_G$ is the identity of $G$.
\end{remark}

\begin{remark}
Each $A_0\in \Aa_0$ gives rise to a $\g$-valued 1-form in a unique way. Hence,
$\Aa_0$ is naturally identified as the space of $G$-connections $\Aa$.
As a result, there is an embedding \[\Aa  \hookrightarrow \Func(\Tt \Mfold, \Tt G).\]
\end{remark}

$\Aa_\hbar$ forms a group under point-wise multiplication of $G$, and 
$\Aa_0$ forms a group under point-wise addition of $\g$.

\begin{proposition}
The product $\Func(\Tt \Mfold, \Tt G)$  inherits  from $\Aa_0$ and $\Aa_\hbar$ is smooth.
\end{proposition}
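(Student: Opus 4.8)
The plan is to verify smoothness separately on the open part $\hbar\in(0,1]$ and across the gluing locus $\hbar=0$, the latter being the only substantive point. For $\hbar\in(0,1]$ the fibrewise product is pointwise multiplication in $G$, namely $(A\cdot B)_\hbar(x,y)=A_\hbar(x,y)\,B_\hbar(x,y)$; since $G$ is a Lie group its multiplication is smooth, and the induced pushforward (composition) operation on the mapping space $\Aa_\hbar=C^\infty(\Mfold\times\Mfold,G)$ is smooth for its natural Fréchet structure, with $\hbar$ entering as an inert smooth parameter. So the only issue is that this product extends smoothly to $\hbar=0$, where the fibre $\Aa_0$ carries the additive group law.

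First I would pass to the coordinate chart in which the tangent groupoid $\Tt\Mfold$ is smooth: parametrise a pair $(x,y)$ near the diagonal by $(x,V_x)$ with $y=\exp\hbar V_x$, so that the gluing of Definition~\ref{qconnection} reads $A_\hbar(x,\exp\hbar V_x)=\exp(\hbar A_0(x,V_x))$. In this chart a q-connection is recorded by the $\g$-valued function $a^{(\hbar)}(x,V_x):=\tfrac1\hbar\log A_\hbar(x,\exp\hbar V_x)$ for $\hbar>0$, which by the gluing extends continuously to $a^{(0)}=A_0$; the smooth structure on $\Func(\Tt\Mfold,\Tt G)$ is exactly the one making $A\mapsto a^{(\hbar)}$ a chart, i.e.\ requiring $(x,V_x,\hbar)\mapsto a^{(\hbar)}(x,V_x)$ to be smooth up to $\hbar=0$. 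For small $\hbar$ the element $\exp(\hbar A_0(x,V_x))$ lies in the neighbourhood of the identity on which $\log$ is a diffeomorphism, so $a^{(\hbar)}$ is well defined.

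The heart of the argument is the Baker--Campbell--Hausdorff formula. The key observation is that the rescaled product
\[
\beta(a,b,\hbar):=\tfrac1\hbar\log\!\big(\exp(\hbar a)\,\exp(\hbar b)\big)=a+b+\tfrac{\hbar}{2}[a,b]+O(\hbar^2)
\]
is, for fixed argument, a smooth function of $(a,b,\hbar)$ on a neighbourhood of $\hbar=0$, every term being an iterated Lie bracket; in particular $\beta(a,b,0)=a+b$. Writing $a^{(\hbar)},b^{(\hbar)}$ for the chart representatives of $A,B$, the chart representative of the product at level $\hbar>0$ is $c^{(\hbar)}=\beta(a^{(\hbar)},b^{(\hbar)},\hbar)$; precomposing the smooth map $\beta$ with the smooth $\hbar$-families $a^{(\hbar)},b^{(\hbar)}$ shows $c^{(\hbar)}$ is smooth up to $\hbar=0$, where it equals $A_0+B_0$. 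This is exactly the representative of the additive product $(A\cdot B)_0=A_0+B_0\in\Aa_0$, so the fibrewise product extends smoothly across the gluing; evaluating at $\hbar=0$ also shows $A_0+B_0$ remains fibrewise linear in $V_x$, so the product lands in $\Func(\Tt\Mfold,\Tt G)$.

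The main obstacle I anticipate is not the algebra but making the $\hbar\to0$ estimate legitimate at the level of the infinite-dimensional function spaces: one must know that the series defining $\beta$ converges for $\hbar a,\hbar b$ small (automatic near $\hbar=0$, using compactness of $G$ to secure a uniform radius over $\Mfold$), and that the induced operations --- pointwise $\exp$, $\log$, and bracketing --- are smooth maps of the Fréchet mapping spaces, so that chartwise smoothness of $(x,V_x,\hbar)\mapsto c^{(\hbar)}$ genuinely expresses smoothness of the product $\Func(\Tt\Mfold,\Tt G)\times\Func(\Tt\Mfold,\Tt G)\to\Func(\Tt\Mfold,\Tt G)$. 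Both are standard consequences of the exponential law for mapping spaces together with smoothness of $\exp:\g\to G$ near $0$; once these are in place the expansion above closes the proof.
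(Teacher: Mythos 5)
Your argument is correct, and it proves something strictly stronger than what the paper's own proof establishes. The paper's entire proof is the single computation
\[
\left.\frac{d}{d\hbar}(A_\hbar\cdot A'_\hbar)(x,\exp\hbar V_x)\right\rvert_{\hbar=0}=(A_0+A'_0)(x,V_x),
\]
i.e.\ it checks only that the derivative at $\hbar=0$ of the pointwise $G$-product of two glued families is the sum of the boundary values, so that the product of two q-connections again satisfies the gluing condition of Definition~\ref{qconnection} with boundary value $A_0+A'_0$. That is a first-order compatibility check. Your route --- passing to the logarithmic chart $a^{(\hbar)}=\tfrac1\hbar\log A_\hbar(x,\exp\hbar V_x)$ and expanding the rescaled product by Baker--Campbell--Hausdorff --- recovers the paper's computation as the $\hbar^{0}$ term of $\beta(a,b,\hbar)=a+b+\tfrac{\hbar}{2}[a,b]+O(\hbar^{2})$, but then goes on to give smoothness of the product in $\hbar$ to all orders up to the boundary, which is what the proposition literally asserts; as a bonus the first-order term $\tfrac{\hbar}{2}[a,b]$ already exhibits the noncommutative correction that the later deformation-quantisation sections are after. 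The price you pay is the functional-analytic overhead you correctly flag (Fr\'echet smoothness of pointwise $\exp$, $\log$ and bracketing, and a uniform convergence radius for the BCH series, which needs care where $A_0(x,V_x)$ is large since it is linear and hence unbounded in $V_x$); the paper avoids all of this by only ever differentiating once at $\hbar=0$. In short: same underlying mechanism (derivative of a product of curves through the identity is the sum of derivatives), but your version is the complete proof of smoothness of which the paper records only the leading term.
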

\begin{proof}
The proof follows from
\[\left. \frac{d}{d\hbar}(A_\hbar \cdot A'_\hbar) ( x, \exp \hbar V_x) \right\rvert_{\hbar=0}  =  (A_0+A'_0 )(x,V_x) .\]
\end{proof}


The q-connection space $\Func(\Tt \Mfold, \Tt G)$ is a package that
captures information about probing the $G$-connection space with the tangent groupoid.
That is, an element $A_\hbar$ of $\Aa_\hbar$ evaluated at $(x,y)\in \Mfold \times \Mfold$
gives the holonomy of a connection along the geodesic from $x$ to $y$.
This holonomy presentation has a natural gauge action given by
applying a symmetry at the starting point $x$, then evaluate the holonomy along the geodesic 
from $x$ to $y$, and finally apply a reverse symmetry at $y$.
We make it formal by the following definition.

Denote by  $C^\infty(\Mfold,G)$ the set of smooth functions from $\Mfold$ to $G$.
Define the {\bf gauge action} of $C^\infty(\Mfold,G)$  on $\Aa_\hbar$ by
\begin{equation}
\label{eqn:qgaugeact}
( g\cdot A_\hbar)(x,y) := g(x) A_\hbar(x,y) g^{-1}(y)
\end{equation}
for $g\in C^\infty(\Mfold,G)$ and $A_\hbar \in \Aa_\hbar$.
And define the gauge action of  $C^\infty(\Mfold,G)$ on $\Aa_0$ by
\begin{equation}
\label{eqn:cgaugeact}
(g\cdot A_0)(x,V_x):= g(x) A_0(x,V_x) g^{-1}(x) + g(x) \langle dg^{-1}(x),V_x\rangle
\end{equation}
for $g\in C^\infty(\Mfold,G)$ and $A_0 \in \Aa_0$.
\begin{remark}
Equation~\eqref{eqn:cgaugeact} is the usual gauge action on $G$-connections.
\end{remark}

The following proposition shows that the gauge actions defined on $\Aa_\hbar $ and $\Aa_0$ are compatible.
\begin{proposition}
The  
$C^\infty(\Mfold,G)$ action  on 
the q-connection space $\Func(\Tt \Mfold, \Tt G)$ induced from Equations~\eqref{eqn:qgaugeact},\eqref{eqn:cgaugeact} is smooth.
\end{proposition}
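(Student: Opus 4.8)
The plan is to mirror the proof of the preceding proposition on smoothness of the product: the action is manifestly smooth on each fibre of $\Func(\Tt \Mfold, \Tt G)$, so the entire content is to check compatibility across the gluing at $\hbar = 0$. For $\hbar \in (0,1]$ the map $(g, A_\hbar) \mapsto g \cdot A_\hbar$ given by \eqref{eqn:qgaugeact} is built from pointwise multiplication and inversion in $G$ composed with the smooth functions $g$ and $A_\hbar$, hence is smooth; likewise the affine action on $\Aa_0$ given by \eqref{eqn:cgaugeact} is smooth on the fibre over $\hbar = 0$. What remains is to verify that these two fibrewise actions fit together smoothly under the gluing $A_\hbar(x, \exp \hbar V_x) = \exp(\hbar A_0(x,V_x))$.

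Following the template set by the previous proof, I would evaluate the transformed $\Aa_\hbar$-element at the glued point $(x, \exp \hbar V_x)$ and differentiate at $\hbar = 0$. Substituting the gluing condition into \eqref{eqn:qgaugeact} gives
\[
(g \cdot A_\hbar)(x, \exp \hbar V_x) = g(x)\,\exp\!\big(\hbar A_0(x,V_x)\big)\,g^{-1}(\exp \hbar V_x).
\]
At $\hbar = 0$ the right-hand side equals $g(x) g^{-1}(x) = I_G$, consistent with the requirement that the glued family limit to the identity as $\hbar \to 0$. Differentiating in $\hbar$ and setting $\hbar = 0$, the product rule produces two terms: differentiating the factor $\exp(\hbar A_0(x,V_x))$ yields $g(x) A_0(x,V_x) g^{-1}(x)$, while differentiating $g^{-1}(\exp \hbar V_x)$ yields $g(x)\langle dg^{-1}(x), V_x\rangle$. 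Their sum is precisely $(g \cdot A_0)(x, V_x)$ as defined in \eqref{eqn:cgaugeact}, so that
\[
\left.\frac{d}{d\hbar}(g \cdot A_\hbar)(x, \exp \hbar V_x)\right|_{\hbar = 0} = (g \cdot A_0)(x, V_x).
\]
This is the identity that certifies that the two fibrewise actions glue to a single smooth action on $\Func(\Tt \Mfold, \Tt G)$.

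The step I expect to require the most care is the bookkeeping of where the various derivatives live. Since the curve $\hbar \mapsto (g \cdot A_\hbar)(x, \exp \hbar V_x)$ passes through $I_G$ at $\hbar = 0$, its velocity there is naturally an element of $T_{I_G} G = \g$, and both terms above must be read as $\g$-valued after left translation to the identity; in particular the term $g(x)\langle dg^{-1}(x), V_x\rangle$ is the pullback of the Maurer--Cartan form and reproduces the inhomogeneous (connection) part of the usual gauge transformation, exactly matching \eqref{eqn:cgaugeact}, while $g(x) A_0(x,V_x) g^{-1}(x)$ is the adjoint (conjugation) term that stays in $\g$. Once this identification is made explicit, the remaining verification is the routine chain-rule computation indicated above, together with the observation that smoothness in the remaining variables $x$ and $V_x$ is inherited from the smoothness of $g$, $g^{-1}$, $A_0$ and the exponential map $\EXP$.
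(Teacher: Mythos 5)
Your proposal is correct and follows essentially the same route as the paper: substitute the gluing condition into \eqref{eqn:qgaugeact}, differentiate at $\hbar=0$ via the product rule, and identify the two resulting terms with the adjoint and Maurer--Cartan parts of \eqref{eqn:cgaugeact}. The paper's proof is exactly this computation, so no further comparison is needed.
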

\begin{proof}
Suppose that $A_\hbar (x,\exp \hbar V_x) = \exp \hbar A_0 (x,V_x)$.
Then 
\begin{eqnarray*}
\left. \frac{d}{d\hbar} (g\cdot A_\hbar)(x,\exp\hbar V_x) \right\rvert _{\hbar=0}
&=&
 \left. \frac{d}{d\hbar} 
 \left( g(x) A_\hbar (x,\exp \hbar V_x) g^{-1} (\exp \hbar V_x)\right) \right\rvert _{\hbar=0}\\
 &=&
 \left. g(x)\left( \frac{d }{d\hbar} A_\hbar (x,\exp \hbar V_x) \right)g^{-1} (\exp \hbar V_x) \right\rvert _{\hbar=0}
 \\&&+
 \left. g(x) A_\hbar(x, \exp \hbar V_x )\left( \frac{d }{d\hbar}g^{-1}(\exp \hbar V_x)  \right)\right\rvert_{\hbar=0}\\
 &=&
 g(x) A_0 (x,V_x)g^{-1}(x) + g(x)\langle d g^{-1} (x), V_x\rangle\\
 &=& (g\cdot A_0 )(x, V_x) .
\end{eqnarray*}
The proof is complete.
\end{proof}

Let $\DIFF(\Mfold)$ denote the diffeomorphism group of $\Mfold$.
 $\Tt \Mfold=T\Mfold \times \{ 0 \} \bigsqcup \Mfold \times \Mfold \times
(0,1]$ carries a smooth $\DIFF(\Mfold)$ action given by
\[
\begin{array}{rl}
(x,y)\mapsto \left(\sigma(x),\sigma(y) \right) & \mbox{ for } (x,y) \in \Mfold \times \Mfold 
\mbox{ and } \sigma \in \DIFF(\Mfold) ,\\
(x,V_x)\mapsto  (\sigma(x), d\sigma_{\sigma(x)} V_x ) & \mbox{ for } (x,V_x) \in T\Mfold 
\mbox{ and } \sigma \in \DIFF(\Mfold) .
\end{array}
\]
Subsequently, $\DIFF(\Mfold)$ acts on the q-connection space $\Func(\Tt \Mfold, \Tt G)$ smoothly via the induced action
\[
(\sigma \cdot A_\hbar ) (x,y) = A_\hbar  \left(\sigma(x), \sigma(y) \right)
\mbox{ and }
(\sigma \cdot A_0 )(x, V_x) = A_0 \left(\sigma(x), d\sigma_{\sigma(x)} V_x\right),
\]
for $\Mfold \in \DIFF(\Mfold)$.

\section{Connections as Operators on Hilbert Space}
\label{sec5}
Suppose that $G $ unitarily represents on some finite dimensional vector space, say without loss
of generality $\Cc^N$. Then $G$
 is included in the  matrix algebra $M_N(\Cc)$ as unitary matrices.
 Let us fix  an
 orientation on $\Mfold$, hence a volume form.
 Then one obtains the Hilbert space 
 $L^2(\Mfold, \Cc^N)$ that $\Aa_\hbar$ acts on by convolution
 \[
 (A_\hbar * \varphi)(y)= \int _\Mfold A_\hbar (x,y) \cdot \varphi (x) dx ,
 \]
 where $A_\hbar \in \Aa_\hbar$, $\varphi \in L^2(\Mfold,\Cc^N)$, the dot $\cdot$ is the
 unitary representation of $G$.
 
Denote by ${\Aa_\hbar^\#}$ the space of $M_N(\Cc)$-valued smooth functions on $\Mfold \times \Mfold$, thus 
${\Aa_\hbar^\#} \supset \Aa_\hbar$ and it acts on $L^2(\Mfold,\Cc^N)$.
Elements of ${\Aa_\hbar^\#}$ will again be denoted by $A_\hbar$.
${\Aa_\hbar^\#}$ comes equipped with an involution  given by the point-wise conjugate transpose of $M_N(\Cc)$.
The action of ${\Aa_\hbar^\#}$ on $L^2(\Mfold,\Cc^N)$ gives rise to 
a noncommutative product on
${\Aa_\hbar^\#} $
given by the convolution 
\[
(A_\hbar * A'_\hbar)(x, z)= \int _\Mfold A_\hbar ( x,y) \cdot A'_\hbar (y,z) dy ,
\]
for $A_\hbar, A'_\hbar \in {\Aa_\hbar^\#}$.

The space ${\Aa_\hbar^\#}$ forms a $*$-algebra, 
and it is identified with the ideal of trace-class operators 
 on the Hilbert space $L^2(\Mfold, \Cc^N)$.

Let $\TR$ denote the operator trace on $L^2(\Mfold,\Cc^N)$,
 and $\TR:{\Aa_\hbar^\#} \to \Cc$ is explicitly given by
\[
\TR (A_\hbar) = \int _M \Tr A_\hbar (x,x) dx ,
\]
where $\Tr$ is the matrix trace of $M_N(\Cc)$.

The linear functional $\TR: {\Aa_\hbar^\#} \to \Cc$ is invariant under the gauge action of $C^\infty(\Mfold,G)$,
as
\begin{eqnarray*}
\TR ( g \cdot A_\hbar) &=& \int _M \Tr \left( g(x) A_\hbar(x,x) g^{-1}(x) \right) dx \\
&=&  \int _M \Tr \left( A_\hbar(x,x) \right) dx \\
&=& \TR(A_\hbar) .
\end{eqnarray*}
Such a property is called {\bf gauge invariant}. 
The group element $A_\hbar(x,x)$ represents the holonomy of a connection around a loop with
base point $x$. The functional $\TR:{\Aa_\hbar^\#} \to \Cc$ being gauge invariant
is parallel to the fact that loop variables being gauge invariant in loop quantum gravity.

The package of 
${\Aa_\hbar^\#}$ acting on  $L^2(\Mfold,\Cc^N)$ with gauge group $C^\infty(M,G)$ resembles the noncommutative standard model, where the algebra is given by the the gauge group, the Hilbert space is unchanged, and the resolvent of the Dirac operator gives rise to an element of ${\Aa_\hbar^\#}$.

\section{Strict Deformation Quantisation of $G$-Connections}
\label{sec6}
\begin{definition}
A {\bf system of  Haar measures} for a groupoid $\Gg  \rightrightarrows ^{\hspace{-0.25cm}^{\RA}}
_{\hspace{-0.25cm}_{\SO}}\Mfold$ is a family of measures $(\lambda^x)_{x\in \Mfold}$, 
where each $\lambda^x$ is a positive, regular, Borel measure on $\Gg^x = s^{-1}(x)$.
\end{definition}

\begin{theorem}[\cite{landsman}]
For every Lie groupoid $\Gg  \rightrightarrows ^{\hspace{-0.25cm}^{\RA}}
_{\hspace{-0.25cm}_{\SO}}\Mfold$, there exists
a smooth system of Haar measures $(\lambda^x)_{x\in \Mfold}$, so that the convolution product 
with respect to the Haar system,
\[
( \varphi \ast \psi )(\gamma) := \int_{\Gg^{s(\gamma)}} f(\gamma  \gamma_1^{-1}) g(\gamma_1) d \lambda^{s(\gamma)} (\gamma_1) 
\mbox{ for }
\varphi, \psi \in C(\Gg),
\]
together with the star structure,
\[
\varphi^*(\gamma):=\overline{\varphi(\gamma^{-1})},
\]
 give rise to a $C^*$-algebra structure on the space of continuous functions
$C^*(\Gg)$ on $\Gg$.
\end{theorem}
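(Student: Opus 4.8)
The plan is to realise $C^*(\Gg)$ as the completion of the convolution $*$-algebra of compactly supported continuous functions $C_c(\Gg)$, so the argument splits into producing the smooth Haar system, verifying the $*$-algebra axioms, and exhibiting a $C^*$-norm. I would begin with the system $(\lambda^x)_{x\in\Mfold}$. Because $\Gg \rightrightarrows \Mfold$ is a Lie groupoid the source map $\SO$ is a submersion, so each fibre $\Gg^x=\SO^{-1}(x)$ is a closed submanifold and $\KER d\SO$ is a smooth subbundle of $T\Gg$ whose restriction to the unit space is the Lie algebroid $A(\Gg)$. I would fix a smooth, nowhere-vanishing positive density on $A(\Gg)$ and transport it along the source fibres by the left translations $L_\gamma$, which are diffeomorphisms between source fibres; this produces a smooth positive density, hence a positive regular Borel measure $\lambda^x$, on each $\Gg^x$. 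By construction the family is \emph{left-invariant}, and smoothness of $x\mapsto\int_{\Gg^x} f\,d\lambda^x$ for $f\in C_c(\Gg)$ is exactly fibre integration of a smooth compactly supported density along the submersion $\SO$.

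With the Haar system in hand, the convolution and involution displayed in the statement are well defined on $C_c(\Gg)$: for fixed $\gamma$ the integrand of $\varphi\ast\psi$ is continuous and compactly supported on $\Gg^{\SO(\gamma)}$, and a support computation shows that $\varphi\ast\psi$ and $\varphi^*$ again lie in $C_c(\Gg)$. Associativity of $\ast$ follows from Fubini's theorem together with the left-invariance of $(\lambda^x)$, while $(\varphi\ast\psi)^*=\psi^*\ast\varphi^*$ and $\varphi^{**}=\varphi$ follow from the inversion relations (in particular $\SO(\gamma^{-1})=\RA(\gamma)$) and the same invariance. This makes $(C_c(\Gg),\ast,{}^*)$ a $*$-algebra.

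The real work lies in manufacturing a $C^*$-norm and completing. The candidate is the universal norm $\|\varphi\|=\sup_\pi\|\pi(\varphi)\|$, the supremum taken over all $*$-representations of $C_c(\Gg)$ by bounded operators on Hilbert space. The key estimate is that every such $\pi$ is automatically bounded by the $I$-norm $\|\varphi\|_I=\max\{\,\sup_x\int_{\Gg^x}|\varphi|\,d\lambda^x,\ \sup_x\int_{\Gg^x}|\varphi^*|\,d\lambda^x\,\}$; this is established by disintegrating $\pi$ over the unit space and applying the Cauchy--Schwarz inequality fibrewise, and it guarantees that the defining supremum is finite. Granting the estimate, $\|\cdot\|$ is a genuine norm satisfying the $C^*$-identity $\|\varphi^*\ast\varphi\|=\|\varphi\|^2$, and I would \emph{define} $C^*(\Gg)$ to be the completion of $C_c(\Gg)$ in this norm. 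I expect the boundedness step to be the principal obstacle: showing that abstract $*$-representations are $I$-norm continuous is the genuine analytic input, whereas the existence of the Haar system and the algebraic identities are comparatively routine once the submersion structure is exploited.
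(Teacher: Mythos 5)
The paper gives no proof of this theorem: it is imported verbatim from Landsman--Ramazan, so the only meaningful comparison is with the standard construction in that literature (Renault, Connes, Landsman). Your proposal follows exactly that route --- produce a smooth left Haar system by fixing a positive density on the Lie algebroid $A(\Gg)=\KER d\SO|_\Mfold$ and transporting it along source fibres by translation, check that $(C_c(\Gg),\ast,{}^*)$ is a $*$-algebra via Fubini and left-invariance, and complete in a universal $C^*$-norm --- and that outline is correct.

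Two points in your last paragraph deserve flagging. First, finiteness of $\sup_\pi\lVert\pi(\varphi)\rVert$ is not the only issue: you must also show the supremum is nonzero on nonzero $\varphi$, i.e.\ that enough representations exist. The standard fix is to exhibit the left regular representations on $L^2(\Gg^x,\lambda^x)$, $x\in\Mfold$, which act by the same convolution formula and are jointly faithful on $C_c(\Gg)$; without them you have only a $C^*$-seminorm and the ``completion'' could collapse. Second, the claim that \emph{every} $*$-representation of the bare $*$-algebra $C_c(\Gg)$ is automatically $I$-norm bounded is genuinely delicate --- it is essentially Renault's disintegration theorem and requires a continuity hypothesis on the representation (continuity in the inductive limit topology). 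The cleaner and more common definition takes the supremum only over $I$-norm-decreasing representations, which makes finiteness trivial by fiat and shifts all the content to the nontriviality supplied by the regular representations. Neither point is fatal, but as written your argument establishes a $C^*$-seminorm whose nondegeneracy is asserted rather than proved. Note also that the paper's own remark that ``the different closures coincide'' (full versus reduced) is an amenability statement that your construction does not address and that the theorem as stated does not actually require.
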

Here we are not being specific on the $C^*$-norm and its closure. For our case of a Lie groupoid, the different closures coincide.

\begin{example}[\cite{landsman}]\mbox{}
\begin{enumerate}
\item
$C^*(G)$ of a Lie group $G$ is the  group $C^*$-algebra.
\item
Given a smooth system of Haar measures $(\lambda^x)_{x\in \Mfold}$, where each $\lambda ^x$ is a Haar measure on $T_x \Mfold$, and  two smooth functions $\varphi,\psi$ on $T\Mfold$. The product
\[
(\varphi\ast \psi ) (V_x):=  \int _{W_x \in T_x \Mfold} \varphi(V_x - W_x) \psi(W_x) d\lambda ^x
\]
defines, by continuity, a $C^*$-algebra structure on the space of continuous functions on $T\Mfold$, denoted
$C^*(T\Mfold)$.
\item
Given a smooth system of Haar measures $(\lambda^x)_{x\in \Mfold}$, where each $\lambda ^x$ is a Haar measure on $M$, and two smooth functions $\varphi', \psi'$ on $M\times M$. The product
\[
(\varphi\ast \psi ) (x,z):=  \int _{y \in \Mfold} \varphi'(x,y) \psi'(y,z) d\lambda ^x
\]
defines, by continuity, a $C^*$-algebra structure on the space of continuous functions on $\Mfold \times \Mfold$, denoted
$C^*(\Mfold \times \Mfold)$.
\end{enumerate}
\end{example}

Similarly, there is associated a groupoid $C^*$-algebra $C^*(\Tt \Mfold)$ to $\Tt \Mfold$, which  can also be seen 
as gluing the groupoid $C^*$-algebras $C^*(T\Mfold)$ and $C^*(M\times M)$ together.

By Fourier transform, the groupoid $C^*$-algebra $C^*(T\Mfold)$ of $T\Mfold$ is isomorphism to 
the continuous function algebra $C_0(T^*\Mfold)$ on the cotangent bundle under point-wise multiplication. 
This algebra contains the Poisson algebra $C^\infty(T^*\Mfold)$. Therefore, one has a Poisson structure on (a sub-algebra of)
$C^*(T\Mfold)$. Similarly,  one has a Poisson structure on (a sub-algebra of) $C^*(\g)$.
\begin{definition}
A {\bf continuous field of $C^*$-algebras $\left(C, \{\Bb_\hbar,\varphi_\hbar\}_{\hbar \in [0,1] }\right)$} over $[0,1]$ consists of a $C^*$-algebra $C$, 
$C^*$-algebras $\Bb_\hbar$, $\hbar \in [0,1]$, with surjective homomorphisms $\varphi_\hbar: C \to \Bb_\hbar$ and an action of $C([0,1])$ on $C$ such that
for all $c\in C$:
\begin{enumerate}
\item
the function $\hbar \mapsto \lVert \varphi_\hbar (c)\rVert$ is continuous;
\item
$\lVert c \rVert = \sup _{\hbar \in [0,1] }  \lVert \varphi_\hbar(c) \rVert$;
\item
for $f\in C([0,1])$, $\varphi_\hbar (f c ) = f(\hbar) \varphi_\hbar (c)$.
\end{enumerate}
\end{definition}

\begin{example}[\cite{landsman}]
\label{fieldalgebra}
For the tangent groupoid $\Tt \Gg$, we define $\Tt \Gg_\hbar := \Gg \times \{\hbar\}$ for $\hbar \neq 0$ and $\Tt\Gg_0 = A(\Gg)$.
The pullback of the inclusion $\Tt\Gg_\hbar \hookrightarrow \Tt\Gg$ induces a map 
$\varphi_\hbar : C^\infty_c (\Tt\Gg) \to C^\infty_c (\Tt\Gg_\hbar)$, which extends by continuity to a surjective $\ast$-homomorphism 
$\varphi_\hbar : C^\ast (\Tt\Gg) \to C^\ast (\Tt\Gg_\hbar)$.
The $C^*$-algebras $C:=C^*(\Tt\Gg)$ and $\Bb_\hbar := C^*(\Tt\Gg_\hbar)$ with the maps $\varphi_\hbar$ for a continuous field over $[0,1]$.
\end{example}

\begin{definition}[\cite{rieffel}]
A strict deformation quantisation of a Poisson manifold $P$ consists of
\begin{enumerate}
\item a continuous field of $C^*$-algebras $(C, \{\Bb_\hbar,\varphi_\hbar \}_{\hbar \in [0,1]})$, with $\Bb_0 = C_0(P)$;
\item a dense Poisson algebra $\Bb^0 \subset C_0(P)$ under the given Poisson bracket 
$\{\mbox{ } ,\mbox{ } \}$ on $P$;
\item a linear map $\Qq:\Bb^0 \to C$  that satisfies (with $\Qq_\hbar (f) := \varphi_\hbar (\Qq (f) )$ )
\begin{eqnarray*}
\Qq_0(f)&=&f , \\
\Qq_\hbar (f^*) &=& \Qq_\hbar (f)^*,
\end{eqnarray*}
for all $f\in \Bb^0$ and $\hbar \in I$, and for all $f,g \in \Bb^0$ satisfies the Dirac condition
\[
\lim _{\hbar \to 0}\left\lVert (i \hbar )^{-1} [\Qq_\hbar(f), \Qq_\hbar (g)] - \Qq_\hbar ( \{f,g\} ) \right\rVert = 0 .
\]
\end{enumerate}
\end{definition}

\begin{theorem}[\cite{landsman}]
\label{liequantisation}
Let $\Gg$ be a Lie groupoid and $A(\Gg)$ its associated Lie algebroid.
The continuous field of $C^*$-algebras $\left(  C^*(\Tt\Gg), \{C^*(\Tt\Gg_\hbar), \varphi_\hbar \}_{\hbar \in [0,1]}\right)$, as defined in Example~\ref{fieldalgebra},
defines a strict deformation quantisation of the Poisson manifold $A^*(\Gg)$.
\end{theorem}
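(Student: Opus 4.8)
The plan is to verify the three ingredients of a strict deformation quantisation in turn, taking the continuous field for granted from Example~\ref{fieldalgebra} and concentrating the effort on the quantisation map and the Dirac condition.

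First I would fix the classical side. The algebroid $A(\Gg)$ is a vector bundle over $\Mfold$ whose groupoid structure is fibrewise addition, so its fibres are abelian and a fibrewise Fourier transform identifies $\Bb_0 := C^*(\Tt\Gg_0) = C^*(A(\Gg))$ with $C_0(A^*(\Gg))$, exactly in parallel with the identification $C^*(T\Mfold)\cong C_0(T^*\Mfold)$ used in the excerpt. This provides $\Bb_0 = C_0(P)$ with $P = A^*(\Gg)$. I would then record the canonical fibrewise-linear (Lie--Poisson) bracket on $A^*(\Gg)$, determined on linear functions $\ell_X$ associated to sections $X$ of $A(\Gg)$ and on pullbacks $\pi^* f$ of $f\in C^\infty(\Mfold)$ by
\[
\{\ell_X,\ell_Y\} = \ell_{[X,Y]_E}, \qquad \{\ell_X,\pi^*f\} = \pi^*\big((\rho\circ X)f\big), \qquad \{\pi^*f,\pi^*g\} = 0,
\]
which recovers the Kirillov--Kostant--Souriau bracket on $\g^*$ in the group case. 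For the dense Poisson subalgebra I would take $\Bb^0$ to be the fibrewise Fourier transforms of $C_c^\infty(A(\Gg))$; this is dense in $C_0(A^*(\Gg))$ and closed under $\{\ ,\ \}$.

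Next I would build $\Qq$. For $f\in\Bb^0$ let $\hat f\in C_c^\infty(A(\Gg))$ be its fibrewise inverse Fourier transform, and define the field $\hbar\mapsto\Qq_\hbar(f)$ by $\Qq_0(f)=f$ and, for $\hbar>0$,
\[
\Qq_\hbar(f)(\gamma) = \hbar^{-n}\,\hat f\!\left(\tfrac{1}{\hbar}\EXP^{-1}(\gamma)\right) j_\hbar(\gamma),
\]
where $n$ is the rank of $A(\Gg)$ and $j_\hbar$ is the Radon--Nikodym factor comparing the fibrewise Lebesgue measure on $A(\Gg)$ with the Haar system of $\Gg$. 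Because $\hat f$ has compact support, the $\hbar^{-1}$-rescaling confines $\gamma$ to the neighbourhood of $\Mfold\hookrightarrow\Gg$ on which $\EXP$ is invertible, so $\Qq_\hbar(f)\in C_c^\infty(\Gg)$; up to the rescaling built into the Haar system of $\Tt\Gg$ these assemble to the smooth compactly supported function $\EXP(\hbar X)\mapsto\hat f(X)$ on $\Tt\Gg$, hence $\Qq(f)\in C_c^\infty(\Tt\Gg)\subset C^*(\Tt\Gg)=C$. The identity $\Qq_0(f)=f$ is Fourier inversion, and $\Qq_\hbar(f^*)=\Qq_\hbar(f)^*$ follows from $\EXP^{-1}(\gamma^{-1})=-\EXP^{-1}(\gamma)$ together with $\widehat{\bar f}(X)=\overline{\hat f(-X)}$, which matches the involution $\varphi^*(\gamma)=\overline{\varphi(\gamma^{-1})}$. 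Continuity of $\hbar\mapsto\|\Qq_\hbar(f)\|$ is then immediate from membership in $C$ and the continuous-field axiom.

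The heart of the proof is the Dirac condition
\[
\lim_{\hbar\to0}\left\|(i\hbar)^{-1}\big[\Qq_\hbar(f),\Qq_\hbar(g)\big]-\Qq_\hbar(\{f,g\})\right\|=0.
\]
I would work in a tubular neighbourhood of $\Mfold\hookrightarrow\Gg$ in $\EXP$-coordinates and expand the convolution of $\Qq_\hbar(f)$ and $\Qq_\hbar(g)$. Writing $\gamma=\EXP(\hbar Z)$ and the integration variable as $\EXP(\hbar Y)$, the groupoid product $\gamma\gamma_1^{-1}=\EXP(\hbar W)$ has $W=(Z-Y)-\tfrac{\hbar}{2}[Z,Y]_E+O(\hbar^2)$ by the Baker--Campbell--Hausdorff expansion governed by the algebroid bracket and anchor; the convolution thereby becomes a rescaled fibrewise convolution on $A(\Gg)$ whose zeroth-order term is the commutative pointwise product and so cancels in the commutator, while the $O(\hbar)$ term reproduces, after transporting through the Fourier transform, exactly $\Qq_\hbar(\{f,g\})$ for the Lie--Poisson bracket above. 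The main obstacle, where I would spend the most care, is upgrading these Schwartz/pointwise asymptotics to convergence in the $C^*$-norm: one must control the higher-order BCH remainder and the $\hbar$-dependence of $j_\hbar$, and bound the remainder operators in the (full $=$ reduced, as noted after the Haar-system theorem) $C^*$-norm rather than merely in the convolution algebra. I would do this by realising the remainders as integral kernels with $\hbar$-uniformly compact support and estimating their operator norms by $L^1$-type (Schur) bounds, which then tend to $0$ as $\hbar\to0$, completing the verification.
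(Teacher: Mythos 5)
The paper offers no proof of this statement: it is imported verbatim from Landsman--Ramazan \cite{landsman}, so there is nothing internal to compare your argument against. Your outline is a faithful reconstruction of the strategy in that reference --- fibrewise Fourier transform identifying $C^*(A(\Gg))$ with $C_0(A^*(\Gg))$, the Lie--Poisson bracket on $A^*(\Gg)$, a Weyl-type quantisation map built from $\EXP$ and an $\hbar$-rescaling, and verification of the Dirac condition by expanding the groupoid convolution in exponential coordinates and dominating the remainder in an $L^1$-type norm that majorises the $C^*$-norm. One technical point to repair: compact support of $\hat f$ does \emph{not} by itself confine $\tfrac{1}{\hbar}\EXP^{-1}(\gamma)$ to the domain of invertibility of $\EXP$ for \emph{all} $\hbar\in(0,1]$ (only for small $\hbar$), so the definition of $\Qq_\hbar(f)$ needs a cutoff function equal to $1$ near the unit space, as in \cite{landsman}; since the Dirac condition only involves $\hbar\to 0$, this does not affect the limit, but it is required for $\Qq$ to land in $C^*(\Tt\Gg)$ at all.
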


The Poisson structure on $A^*(\Gg)$ is induced dually by the Lie bracket on $A(\Gg)$.

\begin{example}\mbox{}
\begin{enumerate}
\item
$C^\infty(\g^*)$ is a Poisson algebra with Poisson bracket induced from the Lie bracket of $\g$.
Take $C$ to be the $C^*$-algebra generated by the tangent groupoid $\Tt(G)$, $\Bb_0$ to be $C^*(\g)$, and
$\Bb_\hbar$ to be the group $C^*$-algebra $C^*(G)$ for $\hbar \neq0$.
The quantisation map $Q$ is the inclusion of $C^\infty(\g^*)$ into $C^*(\Tt G)$, and $Q_\hbar$
is $Q$ followed by map induced by the inclusion of $\g$ or $G$ into $\Tt (G)$.
\item
$T^*\Mfold$ is a symplectic manifold. Its symplectic structure defines the Poisson algebra $C^\infty(T^*\Mfold)$, which includes into continuous function algebra $C^0(T^*\Mfold)$. By Fourier transform, $C^0(T^*\Mfold)$ is isomorphic
to $C^*(T\Mfold)$. The groupoid $T\Mfold$ exponentiates to $\Mfold\times \Mfold$, thus one obtains
the inclusion of $C^\infty(T^*\Mfold)$ into the $C^*$-algebra $C^*(\Tt \Mfold)$.
\item
Take the Lie groupoid $M\times M \times G$, it has the associated Lie algebroid $T\Mfold \times \g$.
$C^\infty(T^*\Mfold \times \g^*)$ is a dense Poisson algebra in $C^*(T\Mfold \times \g)$.
The continuous field of $C^*$-algebras, $(C, \{\Bb_\hbar, \varphi_\hbar \}_{\hbar \in \Rr})$
is given by $C= C^*(\Tt \Mfold)$, $\Bb_\hbar=C^*(M\times M)$ for $\hbar \neq 0$, 
and $\Bb_0=C_0(T^*\Mfold\times \g^*)=C^*(T\Mfold \times \g)$. The quantisation map 
$Q: C^\infty ( T^*\Mfold \times \g^*) \to C^*(\Tt \Mfold)$ is the inclusion, and
 $Q_\hbar : C^\infty (T^*\Mfold \times \g^* ) \to  \Bb_\hbar$ is $Q$ followed by the restriction map.
\end{enumerate}
\end{example}

Recall that  a $q$-connection $A_\hbar \in \Func(\Tt \Mfold, \Tt G)$ is precisely a $\g$-valued 1-form when $\hbar=0$.
By considering (the characteristic function supported 
on) the graph of the function $A_\hbar$, one obtains  a distribution on
the tangent groupoid
 $\Tt (\Mfold \times \Mfold \times G)$ of the Lie groupoid $\Mfold \times \Mfold \times G$.
Therefore, one has an action of the space of connections $\Aa$ on the $C^*$-algebra $C^*(\Tt\Gg)$, for $\Gg=M\times M \times G$.
 By a smearing the characteristic function, that is integrating the distribution	 with
 some smooth function,
 one turns the characteristic function into an element in $C^*(\Tt\Gg)$.
 Therefore, a map from from  $\Func(\Tt \Mfold, \Tt G)$, which includes the space of ordinary connections $\Aa$, to $C^*(\Tt\Gg)$.
 Then Theorem~\ref{liequantisation} allows us to strictly deformation quantise the connections.
 Therefore, what one has obtained in this procedure is a strict deformation quantising of the space of connections $\Aa$, which is obtained by mapping $\Aa$ into a 
 $C^*$-algebra that constitutes the properties of a strict deformation quantisation.
 The goal of this formalism is to provide a deforming parameter $\hbar$, so that when $\hbar=0$, ordinary connections are retrieved. This line of work is to provide loop quantum gravity a semi-classical limit, so that the theory of classical gravity, general relativity, returns as one takes the limit $\hbar \to 0$.

\section{Outlook}
\label{sec7}
The lack of a semi-classical limit in loop quantum gravity has been a long standing problem. The main obstacle of obtaining such a limit is by nature how one traditionally constructs a measure on the space of connections -- probing the connection space with a collection of finite graphs as described in Section~\ref{sec2}. Finite graphs are very rigid and discrete, they do not provide a parameter that one usually encounters in quantum theory to adjust. As a result, obtaining a semi-classical limit simply becomes impossible if a ``smoother'' way of probing the connection space is not introduced. To come up with a smooth way of probing the connection space, we understand this probing procedure as evaluating some one-dimensional objects, which is a groupoid. Thus, by using a smooth groupoid, or more specifically a Lie groupoid, one has a hope of circumventing the discreteness problem. The proposal we give here is the tangent groupoid. As it turns out, when the right tangent groupoid is used, the space of connections includes into the groupoid. From there, one could deform the connections to convolution operators acting on a Hilbert space. This formalism recreates some elements appear in the noncommutative standard model \cite{stdtriple} in the way that, the gauge group is the unitary part of the algebra in the spectral triple of noncommutative standard model, the Hilbert space remains the same, and the connections are realized as trace-class operators, with the trace being a gauge invariant quantity that resembles the loop variables in loop quantum gravity literature.
The tangent groupoid provides another important feature concerning deformation, which is the strict deformation quantisation result of Landsman \cite{landsman}. Deformation quantisation can naturally be formulated in terms of $C^*$-algebraic language, a theorem due to Landsman shows that a tangent groupoid gives rise to a deformation quantisation in the strict sense. By combining Landsman's result and our realization of connections using tangent groupoids, we obtain a deformation quantisation of $G$-connections. And this quantisation formalism permits the existence of a parameter $\hbar$, so that when $\hbar=0$, one retrieves the classical connections.
However, this is not the end of the story. 
Since connection variables are only half of the variables in gravity in Arnowitt-Deser-Misner formulation, one still has to look into the other half of the variables that the connections are conjugate dual to, tetrads or metrics. It is known that tetrads are quantised to degree one differential operators \cite{AGN,lai}, and its semi-classical limit can be obtained from the $\hbar \to 0$ limit of integral kernel of the differential operator multiplied by $\hbar$, which gives nothing but the symbol of the differential operator \cite{lai2}. In the case of the manifold $\Mfold$ being three dimensional, the symbol is an $\su(2)$-valued function 
the Poisson manifold $T^*\Mfold$. Hence the symbol of the differential operator lives in the same space as a connection does. Following up the work of deformation quantisation of connections here, the next step is to examine the interaction of the tetrads with the connections at both the classical level $\hbar=0$ and the quantum level $\hbar \neq 0$. At $\hbar = 0$, one needs to examine the Poisson bracket of a connection and a symbol (of a differential operator), and determine which symbol is conjugate dual to a given connection. At $\hbar \neq 0$, one repeats the same procedure except now that the Poisson bracket is replaced with a commutator. We will leave those considerations to another article. Finally, the author would like to stress that the line of work here is toward obtaining a semi-classical limit in loop quantum gravity, and this tangent groupoid application has not been seriously considered before, thus the work here may appear incomplete. Nonetheless, the development so far shows that a semi-classical limit in loop quantum gravity is more within reach than before.

\bibliographystyle{amsalpha}

\end{document}